\newtheorem{theorem}{\textbf{Theorem}}
\newtheorem{remark*}{Remark}
\newtheorem{claim}{\textbf{Claim}}
\newcommand{\N}{\mathbb{N}}
\newcommand{\Z}{\mathbb{Z}}
\newcommand{\Poly}{{\mathsf{P}}}
\newcommand{\NP}{{\mathsf{NP}}}
\newcommand{\coNP}{\text{co-}{\mathsf{NP}}}
\newcommand{\PSPACE}{{\mathsf{PSPACE}}}
\let\emptyset\varnothing 
\newcommand{\decisionpb}[3]{\fbox{\parbox{0.9\textwidth}{{\bf #1}\\{\it input:} #2\\{\it question:} #3}}}
\definecolor{csoldiers}{HTML}{9933ff} 
\definecolor{csquare}{HTML}{b3b300} 
\definecolor{ccontour}{HTML}{ff0066} 
\definecolor{cguide}{HTML}{e67300} 
\definecolor{carms}{HTML}{00b386} 
\definecolor{canchors}{HTML}{0055ff} 
\definecolor{citems}{HTML}{002080} 
\definecolor{czippers}{HTML}{804000} 
\definecolor{chigh}{HTML}{ff4dff} 
\newcommand{\tm}{$^\text{\tiny TM}$}
\newcommand{\hdomino}[2]{\raisebox{-4pt}{\tikz[scale=.5]{\draw[black!25] (1,0) -- ++ (0,1); \draw (0,0) rectangle ++ (2,1); \node at (.5,.5) {\scriptsize $#1$}; \node at (1.5,.5) {\scriptsize $#2$};}}}
\newcommand{\vdomino}[2]{\raisebox{-4pt}{\tikz[scale=.5]{\draw[black!25] (0,1) -- ++ (1,0); \draw (0,0) rectangle ++ (1,2); \node at (.5,.5) {\scriptsize $#1$}; \node at (.5,1.5) {\scriptsize $#2$};}}}
\newlength{\textone}
\newlength{\texttwo}
\newcommand{\hhdomino}[2]{
  \settowidth{\textone}{\scriptsize $#1$}
  \settowidth{\texttwo}{\scriptsize $#2$}
  \raisebox{-4pt}{\tikz[scale=.5]{
    \draw[black!25] (2*\the\textone+8,0) -- ++ (0,1);
    \draw (0,0) rectangle ++ (2*\the\textone+2*\the\texttwo+16,1);
    \node at (\the\textone+4,.5) {\scriptsize $#1$};
    \node at (2*\the\textone+\the\texttwo+12,.5) {\scriptsize $#2$};
  }}
}
\newcommand{\crown}{\tikz[scale=.05]{
  \fill (0,0) -- (0,2) -- (.5,.5) -- (1,2) -- (1.5,.5) -- (2,2) -- (2,0) -- cycle;
  \draw (0,2) circle (5pt);
  \draw (1,2) circle (5pt);
  \draw (2,2) circle (5pt);
}}
\title{$\NP$-completeness of the game {\em Kingdomino\tm}}
\author[2]{Viet-Ha Nguyen}
\author[1]{K\'evin Perrot}
\author[3]{Mathieu Vallet}
\affil[1]{Universit\'e publique}
\affil[2]{Univ. C{\^o}te d'Azur, CNRS, Inria, I3S, UMR 7271, Sophia Antipolis, France}
\affil[3]{Aix Marseille Univ., Univ. Toulon, CNRS, LIS, UMR 7020, Marseille, France}
\date{}
\begin{document}
\renewcommand{\labelitemi}{$\circ$}
\renewcommand{\labelitemii}{$\circ$}
\setlist[itemize,enumerate]{nosep}
\maketitle

\begin{abstract}
  {\em Kingdomino\tm} is a board game designed by Bruno Cathala and edited by
  {\em Blue Orange} since 2016. The goal is to place $2 \times 1$ dominoes
  on a grid layout,
  and get a better score than other players.
  Each $1 \times 1$ domino cell has a color
  that must match at least one adjacent cell,
  and an integer number of crowns (possibly none) used to compute the score.
  We prove that even with full knowledge of the future of the game,
  it is $\NP$-complete to decide whether a given score is
  achievable at {\em Kingdomino\tm}.
\end{abstract}

\section{Introduction}
\label{s:intro}

{\em Kingdomino\tm} is a 2-4 players game where players, turn by turn, place $2
\times 1$ dominoes on a grid layout (each player has its own board, independent
of others). Each domino has a color on each of its two $1 \times 1$ cells, and
when a player is given a domino to place on its board, he or she must do so
with a color match along at least one of its edges. Also, if a domino {\em can}
be placed (with at least one possible color match) then it {\em must} be
placed, and if it cannot be placed then it is discarded.
Finally, each player starts from a $1 \times 1$ tower matching any
color. The winner of the game is the player that has the maximum score among
the competitors. The computation of score will be precised in
Section~\ref{s:model}, it is basically a weighted sum of the number of cells in
each monochromatic connected components on the player's board.

The purpose of this article is to prove that, even with full knowledge of the
future of the game ({\em i.e.} the sequence of dominoes he or she will have to
place), a player willing to know whether a given score is achievable
is faced with an $\NP$-complete decision problem.

Section~\ref{s:games} reviews some results around games complexity and domino
problems, Section~\ref{s:model} presents our theoretical modeling of {\em
Kingdomino\tm}, Section~\ref{s:count} illustrates the combinatorial explosion
of possibilities, and Section~\ref{s:NPc} proves the $\NP$-hardness result.

\section{Computational complexity of games with dominoes}
\label{s:games}

{\em Kingdomino\tm} has been studied in~\cite{kingdomino18}, where the authors compare different strategies to play the game, via numerical simulations.

Understanding the computational complexity of games has raised some interest
in the computer science community, and numerous games have been proven to be
complete for $\NP$ or $\coNP$.
Examples include
{\em Minesweeper}~\cite{mine1,mine2},
{\em SET}~\cite{gameset},
{\em Hanabi}~\cite{hanabi},
some {\em Nintendo} games~\cite{nintendo} and
{\em Candy Crush}~\cite{candy1}.

Domino tiling problems are a cornerstone for computer
science, from undecidable ones~\cite{b66,jr15,k08}
to simple puzzles~\cite{h76,ku68,s10}.
Tiling some board with dominoes under constraints
has already been seen to be $\NP$-complete, and constructions 
vary according to the model definition~\cite{b08,er13,mr01,py13}.
The model of \cite{ww04} is close to {\em Kingdomino\tm}, its
construction can be adapted to prove that starting from a board
with some dominoes already placed, deciding whether it can be completed
to achieve at least a given score is $\NP$-complete.
The challenging part
of the present work is to start from nothing else but
the tower of {\em Kingdomino\tm}.

\section{Model and problem statement}
\label{s:model}

In order to apply the theory of
computational complexity to {\em Kingdomino\tm}, we will consider a one player model
which concentrates on an essential aspect of the game: how to
maximize one's score, which is the source of domino choices. Also, in the game
{\em Kingdomino\tm} there is a fixed set of colors (6) and a fixed multiset of
dominoes (48, some dominoes have more than one occurrence), but we will
abstract these quantities to be any finite (multi)set.

A {\em domino} is a $2 \times 1$ rectangle, with one {\em color} among a finite
set on each of its two $1 \times 1$ {\em cells}. A domino also has a number of {\em
crowns} on each of its cells, used to compute the score. For convenience we
consider colors to be integers, and represent a domino as follows:
$\hdomino{1{\crown}}{2}$ for the domino with one cell of color $1$ with one
crown, and one cell of color $2$ with no crown.
A {\em tiling} is an overlap free placement of dominoes on the $\Z^2$ board,
with a special cell at position $(0,0)$ called the {\em tower}.
Given a sequence of $n$ dominoes
$\tau=(\,\hdomino{c_1}{c_2}\,,\dots,\hhdomino{c_{2n-1}}{c_{2n}})$ possibly with
crowns, a {\em K-tiling by $\tau$} is a tiling respecting the following
constraints defined inductively.
\begin{itemize}
  \item The tiling with only the tower at position $(0,0)$ is a K-tiling by
    $\emptyset$ (case $n=0$).
  \item Given a K-tiling by the $n-1$ first dominoes of $\tau$, the last domino
    of colors $c_{2n-1}$ and $c_{2n}$ can be placed on a pair of adjacent
    positions, if and only if at least one of its two cells is adjacent to a
    cell of the same color that has already been placed, or is adjacent to the
    tower. It is lost (not placed) if and only if it can be placed nowhere.
    This gives a K-tiling by the $n$ dominoes of $\tau$.
\end{itemize}
Hence dominoes must be placed in the order given by the sequence $\tau$.
Note that the definition of K-tiling does not take into consideration the
crowns. They are only used to compute the score, as we will explain now.

The {\em score} of a K-tiling is the sum, for each monochromatic connected
component (called {\em region}), of its number of cells times the number of
crowns it contains. Note that a color may give rise to more than one region,
and that a {\em region} scores no
point if it contains no crown.
We will say that some cell (resp. domino) must be {\em
connected} to some region, to mean that it (resp. one of its two cells) must belong to
this monochromatic connected component.

\begin{figure}
  \centerline{\includegraphics{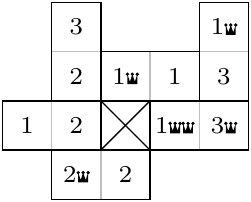}}
  \caption[Example of K-tiling, with score.]{Example of K-tiling by $(\,
  \hdomino{2{\crown}}{2}\,,
  \hdomino{1}{2}\,,
  \hdomino{1{\crown}}{1}\,,
  \hdomino{1{\crown\crown}}{3{\crown}}\,,
  \hdomino{3}{1{\crown}}\,,
  \hdomino{2}{3}\,
  )$, with score $0+1+4+9+2+0=16$
  (regions are ordered from top to bottom, left to right).}
  \label{fig:def}
\end{figure}

Definitions are
illustrated on Figure~\ref{fig:def}.
We are ready to state the problem.\\[.5em]
\decisionpb{K-tiling problem}
{a sequence of dominoes $\tau$ and an integer $s$.}
{is there a K-tiling by $\tau$ with score at least $s$?}\\[.5em]
Given a tiling where each domino of the sequence $\tau$ is identified (a
potential solution, {\em aka} certificate), one verifies domino after domino
that it is indeed a K-tiling by $\tau$, and computes the score to verify that
it is indeed at least $s$, in polynomial time. Hence {\bf K-tiling problem}
belongs to $\NP$.

Remark that our modeling of {\em Kingdomino\tm} discards the official game rule
regarding a bounding box for player's boards, where dominoes must be placed
inside a square of size $5 \times 5$ or $7 \times 7$ containing the tower.

\section{Counting K-tilings}
\label{s:count}

To give an idea of the combinatorial explosion one faces when playing
{\em Kingdomino\tm} or when deciding some {\bf K-tiling problem} instance, we
propose in Table~\ref{table:count} to count the number of possible K-tilings
for some small sequences of dominoes. These exact results were obtained by
brute force numerical simulations.

\begin{table}[!h]
  \centerline{\begin{tabular}{c|c|c|c|}
    dominoes &
    $(\,\vdomino{1{\crown}}{1},\vdomino{1}{1},\vdomino{1}{1},\vdomino{1}{1}\,)$ &
    $(\,\vdomino{1{\crown}}{1},\vdomino{2{\crown}}{2},\vdomino{3{\crown}}{3},\vdomino{4{\crown}}{4}\,)$ &
    $(\,\vdomino{1{\crown}}{2{\crown}},\vdomino{3{\crown}}{4{\crown}},\vdomino{5{\crown}}{6{\crown}},\vdomino{7{\crown}}{8{\crown}}\,)$\\[.2em]
    \hline
    $1^\text{st}$ (score 2) & 2 (24) & 2 (24) & 4 (24)\\
    $2^\text{nd}$ (score 4) & 19 (752) & 13 (400) & 52 (400)\\
    $3^\text{rd}$ (score 6) & 253 (35448) & 63 (4032) & 504 (4032)\\
    $4^\text{th}$ (score 8) & 3529 (2176064) & 141 (18048) & 2256 (18048)\\
  \end{tabular}}
  \caption{Number of K-tilings reaching the maximum possible score, for some
  small sequences of dominoes. Rotations and axial symmetries are counted only
  once, and the positions of crowns are not taken into account
  (in parenthesis the full counts are given).}
  \label{table:count}
\end{table}

Table~\ref{table:count} may be compared to the number of domino tilings of a
$2n \times 2n$ square, appearing in the {\em Online Encyclopedia of Integer
Sequences} under reference {\tt A004003} \cite{oeisA004003}: 1, 2, 36, 6728,
12988816, 258584046368, 53060477521960000, {\dots}

\section{$\NP$-hardness of {\bf K-tiling problem}}
\label{s:NPc}

In this section we prove the main result of the article.

\begin{theorem}
  \label{theorem:main}
  {\bf K-tiling problem} is $\NP$-hard.
\end{theorem}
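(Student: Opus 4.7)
The plan is to prove \NP-hardness by polynomial-time reduction from \textbf{3-SAT}. Given a 3-CNF formula $\varphi$ with $n$ variables and $m$ clauses, I would construct in polynomial time a sequence of dominoes $\tau$ and a threshold $s$ such that $\varphi$ is satisfiable if and only if the \textbf{K-tiling problem} on input $(\tau,s)$ admits a positive answer. Membership in \NP\ has already been argued, so this suffices.

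The sequence $\tau$ is organized in three conceptual phases. First, a \emph{scaffolding} phase whose dominoes are forced into essentially unique placements (up to a global rigid motion fixed at the very first placements), so that the board acquires a rigid layout containing dedicated zones for the variables and the clauses. Second, a \emph{variable} phase where, for each variable $x_i$, a short subsequence of dominoes can be placed in exactly two ways, the two options corresponding to the truth values of $x_i$ and propagating along color channels into the clause zones. Third, a \emph{clause} phase where, for each clause $C_j$, a local gadget yields a fixed number of additional crown-bearing cells if and only if at least one of its three literals has been set in a satisfying way.

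The scaffolding is the heart of the construction, and its design crucially exploits the rule that a placeable domino \emph{must} be placed. I would introduce a large palette of mutually exclusive colors, each dedicated to a specific structural role, together with chains of forced placements that extend outward from the tower, so that each scaffolding domino admits exactly one legal position at the moment it is drawn from $\tau$. The variable and clause gadgets are then wired to this scaffolding so that their adjacency requirements cannot be satisfied elsewhere, ensuring that the binary choices are truly local and independent, and that no gadget domino is silently discarded because it cannot be placed.

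The main obstacle, which is precisely what the paper advertises over the easier adaptation of~\cite{ww04}, is that the board must be built from nothing but the tower; one cannot pre-place any fixed structure. The sequence order itself must therefore simultaneously (a) construct the working area, (b) prevent unintended placements of structural dominoes that would corrupt a gadget, and (c) guarantee that every gadget domino has at least one legal position regardless of earlier binary choices. Ruling out (b) relies on the unique-color discipline and careful shaping of the boundary of the already-built region; ruling out (c) requires each gadget to expose both a \emph{true}-port and a \emph{false}-port at the moment its dominoes arrive. Finally, the threshold $s$ is set to the total crowns produced by the scaffolding and variable phases (which is constant across all legal plays) plus $m$ times the per-clause contribution, so that reaching score $s$ is equivalent to satisfying every clause of $\varphi$. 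Correctness of the reduction then reduces to verifying, gadget by gadget, the two implications between satisfying assignments and high-score K-tilings.
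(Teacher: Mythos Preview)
Your proposal is a plan, not a proof. You reduce from \textbf{3-SAT} and describe three phases (scaffolding, variables, clauses), but you never actually construct any of the gadgets; you only list the properties they ought to have. The entire technical content of the reduction---the part the paper itself flags as ``the challenging part''---is deferred to the unproven assertion that ``each scaffolding domino admits exactly one legal position at the moment it is drawn.'' That assertion is far from obvious and is very likely false as stated: when a domino $\hdomino{a}{b}$ arrives with $a$ matching a single exposed cell, the $b$-cell can typically still go in two or three directions, so uniqueness of placement would require the already-built region to wrap around and block all but one of those directions \emph{in advance}. Engineering such wrap-arounds from nothing but the tower, while simultaneously leaving room for later variable and clause gadgets and preventing any scaffolding domino from matching a stray cell elsewhere, is precisely the difficulty, and you have not addressed it. Likewise, the clause gadget is described only by its intended behaviour (``yields a fixed number of additional crown-bearing cells if and only if at least one literal is satisfied''), with no indication of how the scoring rule---region size times crowns---is made to realise an OR of three inputs.

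For comparison, the paper does \emph{not} attempt step-by-step forced placements. It reduces instead from the strongly \NP-complete \textbf{4-Partition problem} (``strongly'' matters because item sizes are encoded in unary, as numbers of dominoes), and sets the target score $s$ equal to the absolute maximum conceivable: the sum over colors of (number of cells of that color)$\times$(number of crowns of that color). Reaching $s$ therefore forces every crowned color to form a single connected region, and the bulk of the argument (Claims~\ref{claim:contour}--\ref{claim:anchors}) extracts structural consequences from that single-region requirement: a pseudo-cycle of contour dominoes, a square shape via an isoperimetric count, bicolor stacks for the guide and arm dominoes because any deviation would split one of two alternating colors, and so on. Freedom in intermediate placements is genuinely present; it is the global \emph{score} constraint, not local forcing, that rules out deviant tilings. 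Your outline offers no analogue of this mechanism, and without either a concrete forcing construction or a score-based rigidity argument, there is no proof.
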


\begin{proof}
  We make a polynomial time many-one reduction
  from {\bf 4-Partition problem}, which is
  known to be strongly $\NP$-complete \cite{gj79}.
  This is important, since we encode the instance of {\bf 4-Partition problem}
  in unary into an instance of {\bf K-tiling problem} (basically, with $28x$
  domino cells for each item of size $x$).\\[.5em]
  \decisionpb{4-Partition problem}
  {$n$ items of integer sizes $x_1,\dots,x_n$, and $m=\frac{n}{4}$ bins of size $k$,\\
  \phantom{\em input:} with $n$ a multiple of four, $x_i>0$ for all $i$, and $\sum_{i=1}^{n}x_i=km$.}
  {is it possible to pack{\footnotemark} the $n$ items into the $m$ bins,\\
  \phantom{\em question:} with exactly four items (whose sizes thus sum to $k$) per bin?}
  \footnotetext{Of course an item cannot be split.}\\[.5em]

  Given such an instance of {\bf 4-Partition problem}, we first multiply by
  $28$ all item and bin sizes (for technical reasons to be explained later) and
  consider the equivalent instance with $n$ items of strictly positive integer
  sizes $x_1 \leftarrow 28x_1,\dots,x_n \leftarrow 28x_n$ and $m$ bins of size
  $k \leftarrow 28k$ (for convenience we keep the initial notations with $x$
  and $k$). We then construct (in polynomial time from a unary encoding) the
  following sequence of dominoes $\tau$:\\[-.6em]
  \begin{enumerate}[itemsep=.5em]
    \item {\em guardians}\quad $\hdomino{1{\crown}}{1},\hdomino{2}{2},\hdomino{3}{3},\hdomino{4}{4},$
    \item {\em square}\quad $(\,\hdomino{1}{1}\,)^{18m^2-6},$
    \item {\em contour}\quad $\hdomino{1}{5{\crown}},\hdomino{5}{6{\crown}},\hdomino{6}{7{\crown}},\hdomino{7}{8{\crown}},\hdomino{8}{1},\hdomino{8}{1},\hdomino{8}{2},\hdomino{8}{9{\crown}},$\\[.5em]
      \phantom{\em contour}\quad $(\,\hdomino{9}{9}\,)^{9m-8},\hdomino{9}{10{\crown}},$\\[.5em]
      \phantom{\em contour}\quad $\hdomino{10}{11{\crown}},\hdomino{11}{12{\crown}},\hdomino{12}{13{\crown}},\hdomino{13}{14{\crown}},\dots,\hhdomino{3m+10}{3m+11{\crown}},$\\[.5em]
      \phantom{\em contour}\quad $\hhdomino{3m+11}{3m+12{\crown}},\hhdomino{3m+12}{5},$
    \item {\em guide}\quad $(\,\hdomino{6}{7}\,)^{18m^2+12m},$
    \item {\em arms}\quad $(\,\hdomino{10}{11}\,)^{\frac{k}{4}+2},(\,\hdomino{13}{14}\,)^{\frac{k}{4}+2},(\,\hdomino{16}{17}\,)^{\frac{k}{4}+2},\dots,(\,\hhdomino{3m+10}{3m+11}\,)^{\frac{k}{4}+2}$
    \item {\em anchors}\quad $(\,\hhdomino{12}{3m+13}\,)^2,(\,\hhdomino{15}{3m+13}\,)^2,\dots,(\,\hhdomino{3m+9}{3m+13}\,)^2,$
    \item {\em items}\quad for each $x_i$ we have $\hhdomino{3m+13}{3m+13+i{\crown}},(\,\hhdomino{3m+13+i}{3m+13+i}\,)^{\frac{x_i}{2}-1},$
    \item {\em zippers}\quad $\hhdomino{11}{3m+13+n+1{\crown}},\hhdomino{3m+13+n+1}{13},$\\[.5em]
      \phantom{\em zippers}\quad $\hhdomino{14}{3m+13+n+2{\crown}},\hhdomino{3m+13+n+2}{16},$\\[.5em]
      \phantom{\em zippers}\quad $\dots,$\\[.5em]
      \phantom{\em zippers}\quad $\hhdomino{3m+8}{4m+13+n{\crown}},\hhdomino{4m+13+n}{3m+10},$\\[-.6em]
  \end{enumerate}
  and the target score $s=72m^2+54m+\frac{k}{2}(3m+1)+1$. This is our instance
  of {\bf K-tiling problem}, with the idea of the reduction presented on
  Figure~\ref{fig:idea}. Let us now prove that there exists a packing of
  the $n$ items into the $m$ bins of size $k$ with four items per bin if and
  only if there exists a K-tiling by $\tau$ with score at least $s$.

  \begin{figure}
    \centerline{\includegraphics[scale=.75]{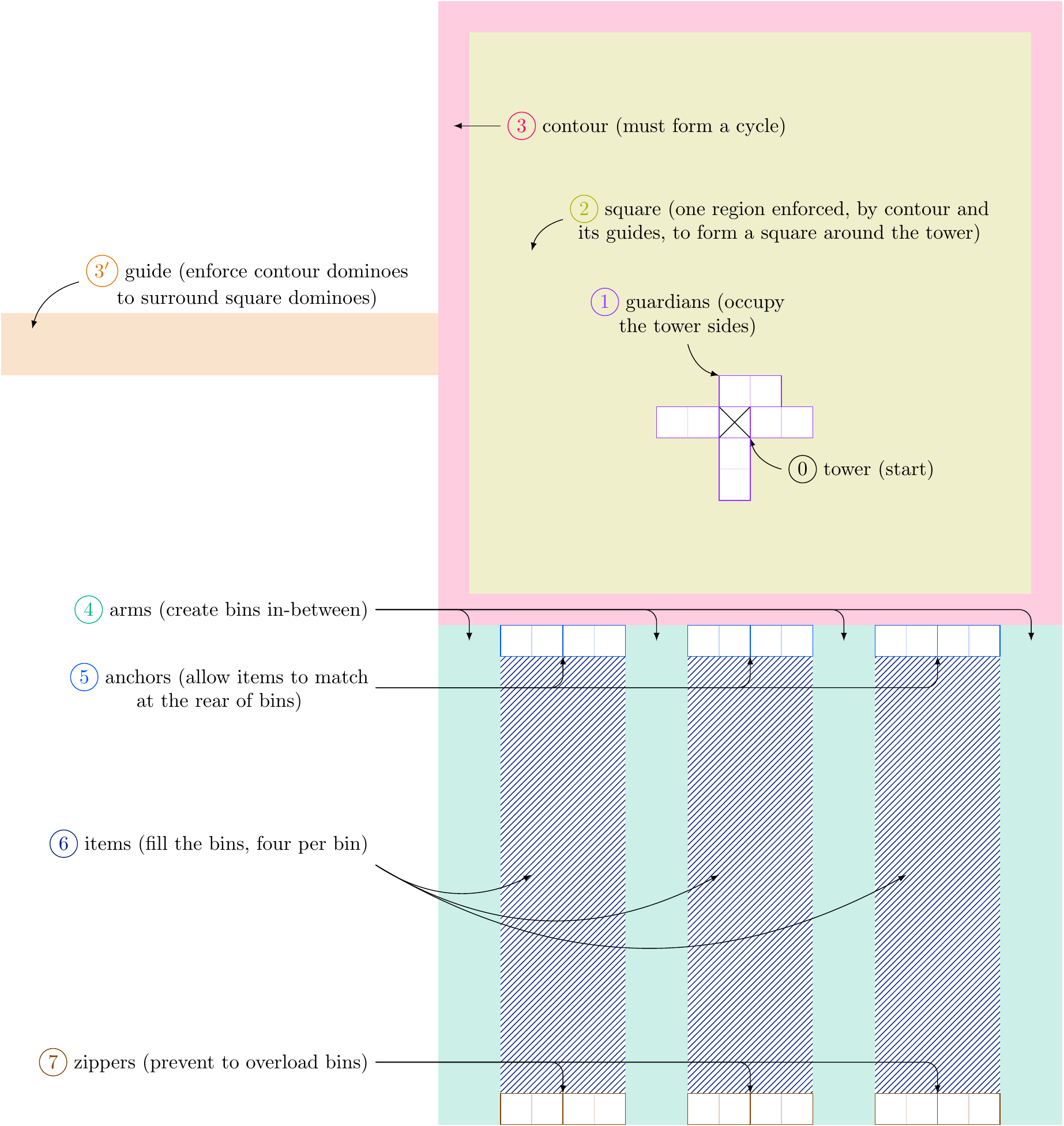}}
    \caption{
      Sketch of a K-tiling of score $s$ by $\tau$ in the reduction from {\bf
      4-Partition problem} to {\bf K-tiling problem}. Circled numbers from 1 to
      7 allow to follow the process of dominoes placement, chronologically. The
      main idea is to create $m=\frac{n}{4}$ bins of a given area (hatched on
      the figure), that can be filled with groups of {\em items} dominoes
      (whose quantities/areas correspond to item sizes) if and only if all items
      can be packed into the $m$ bins (with exactly four items, of sum $k$, per
      bin).
    }
    \label{fig:idea}
  \end{figure}

  \begin{figure}
    \centerline{\includegraphics[scale=.58]{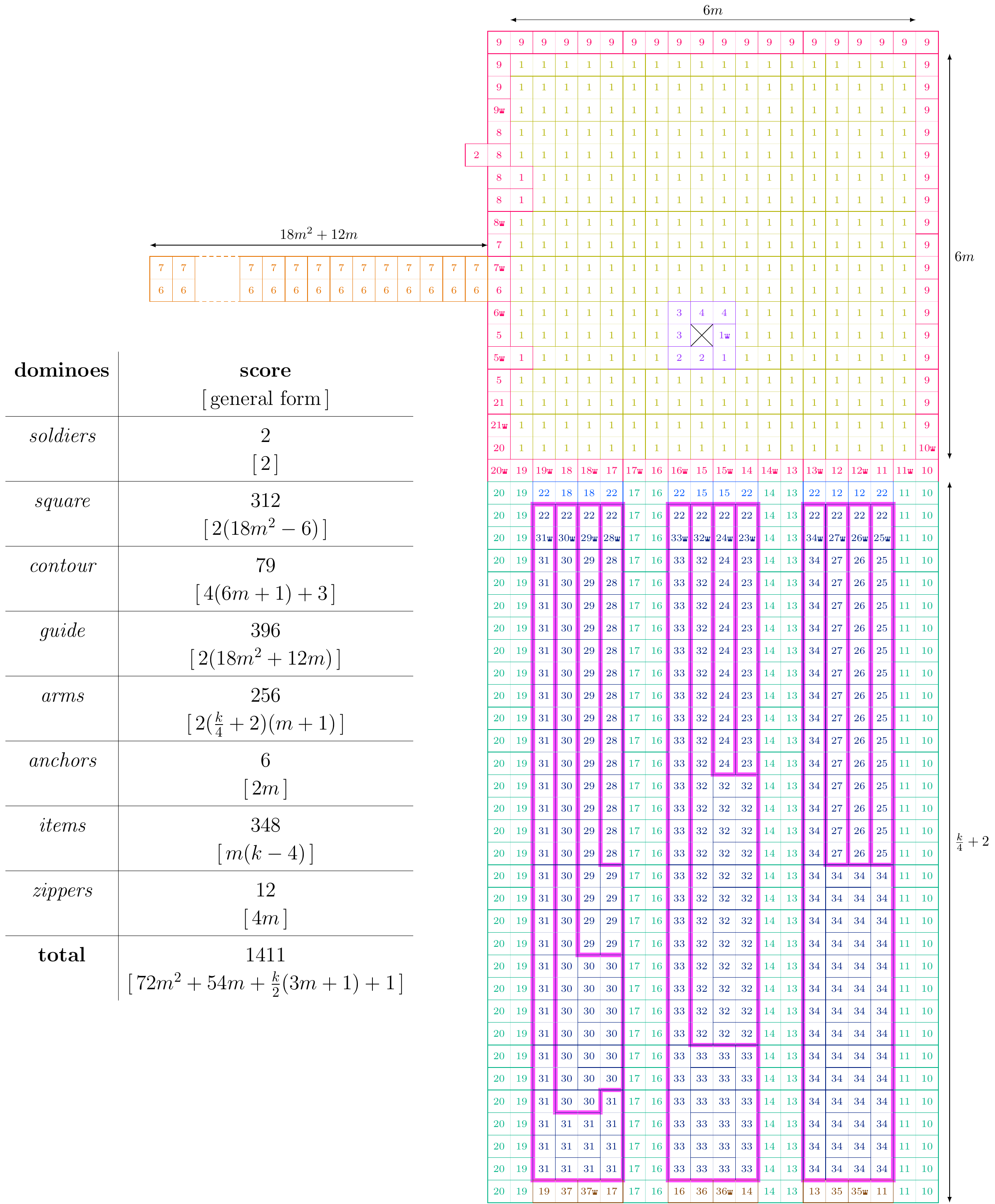}}
    \caption{To reduce the height of this figure, original sizes have only been
    multiplied by $4$ instead of $28$. A K-tiling by $\tau$ with score $s$
    (hence solving the {\bf K-tiling problem} instance), from a solution to the
    {\bf 4-Partition problem} instance with $n=12$, $m=3$, $k=120$ (originally
    $k=30$), and item sizes $12,12,16,16,16,16,24,40,40,48,48,72$ (originally
    $3,3,4,4,4,4,6,10,10,12,12,18$): first bag $72+16+16+16$, second bag
    $48+48+12+12$, third bag $40+40+24+16$. Domino colors are
    \textcolor{csoldiers}{\em guardians},
    \textcolor{csquare}{\em square},
    \textcolor{ccontour}{\em contour},
    \textcolor{cguide}{\em guide},
    \textcolor{carms}{\em arms},
    \textcolor{canchors}{\em anchors},
    \textcolor{citems}{\em items} (groups are \textcolor{chigh}{highlighted}), and
    \textcolor{czippers}{\em zippers}.
    The anchor color $3m+13$ (on which groups of item dominoes
    can match) equals $22$ on this example.}
    \label{fig:tiling}
  \end{figure}

  \medskip
  \fbox{$\Rightarrow$} Suppose there exists a packing of the $n$ items into the
  $m$ bins of size $k$ with exactly four items per bin, and let $X_j$ be the
  set of items in bin $j$. We construct the following K-tiling by $\tau$ (see
  Figure~\ref{fig:tiling}).
  \begin{enumerate}[topsep=.5em,itemsep=.5em]
    \item Place the four {\em guardians} dominoes around the tower.
    \item Around this create a square of size $6m \times 6m$ with the {\em
      square} dominoes $\hdomino{1}{1}$, leaving three dents empty on the left
      border of the square at the fifth, twelfth and thirteenth positions from
      the bottom left corner (the square has area $36m^2$, minus $9$ cells
      already taken by the {\em guardians} dominoes and the tower, minus $3$
      dents, hence exactly the $18m^2-6$ {\em square} dominoes are required).
    \item Make a path clockwise around the square with the {\em contour}
      dominoes in this order, filling the three dents with cells of color $1$,
      leaving the cell of color $2$ outside, and starting with the first dent
      at the fifth position above the bottom left corner of the square (the
      contour has length $4(6m+1)$, corresponding exactly to the $12m+4$ {\em
      contour} dominoes with four dents).
    \item Stack all {\em guide} dominoes on the left of the
      $\hdomino{7{\crown}}{6}$ domino of the border.
    \item Stack all {\em arms} dominoes, color by color, below the
      corresponding dominoes of the border. Observe that they match exactly one
      domino over three of the bottom border, creating $m+1$ stacks of length
      $\frac{k}{4}+2$, and therefore $m$ bins of size $k+8$ in between.
    \item Place a pair of {\em anchors} dominoes per bin, matching the existing
      colors, as on Figure~\ref{fig:tiling}.
    \item Place {\em items} dominoes corresponding to items of $X_j$ in bin
      $j$, filling $k$ cells of each bin and leaving the last row of four cells
      empty.
    \item Close each bin with the corresponding {\em zippers} dominoes (anchors
      {\em dominoes} fill four cells, consequently the {\em items} dominoes
      leave four cells in each bin, exactly the number of cells required for
      the {\em zipper} dominoes to match colors onto the arms on both
      ends\footnote{Note that the pattern of placement sketched on
      Figure~\ref{fig:tiling} can be extended to pack each bin with {\em items}
      dominoes corresponding to any four items of sizes summing to $k$, and
      leaving four cells on the bottom end for the {\em zippers} dominoes.}).
  \end{enumerate}
  The score of this K-tiling is $s$, as detailed on Figure~\ref{fig:tiling}.

  \medskip
  \fbox{$\Leftarrow$} This is the challenging part of the proof, where we will
  argue that the construction of a K-tiling by $\tau$ with score at least $s$
  is compelled to have the structure described above and illustrated on
  Figure~\ref{fig:tiling}, which corresponds to solving the {\bf 4-Partition
  problem} instance. The proofs of some claims are postponed.
  
  Suppose there exists a K-tiling by $\tau$ with score $s$. First notice that
  $s$ is an upper bound on the score one can obtain with a K-tiling by $\tau$,
  as it is the sum for each color of the number of cells of this color times
  the number of crowns on cells of this color. It must therefore correspond to
  a K-tiling with one region per color, except for colors $2,3,4$ and $3m+13$
  which have no crown. This will be the main assumption that will guide us as
  we study the dominoes chronologically. Also remark that all dominoes must be
  placed: at the beginning colors $2,3,4$ can always match the {\em tower}, and
  afterwards colors $2,3m+13$ appear only on dominoes with another color
  bringing some necessary points to the sum $s$.

  \begin{enumerate}[topsep=.5em,itemsep=.5em]
    \item There is no choice but to place the four {\em guardians} dominoes on
      the four sides of the tower. As a consequence, we don't have to treat the
      particular case of the tower anymore.
    \item All {\em square} dominoes are monochromatic, hence they form one
      large region of color $1$ (remark that they can, and therefore must, all
      be placed). We have now $9+2(18m^2-6)=36m^2-3$ cells occupied by
      some dominoes or the tower.
    \item For the {\em contour} dominoes the three cells of color $1$ must be
      connected to the unique region of color $1$ since this color will not
      appear anymore.
      {\em Contour} dominoes must be arranged in a cycle with some possible
      defects, that we define now. The idea is that colors $5$ to $8$ and $10$
      to $3m+12$ will have a simple path arrangement, whereas color $9$ may form
      a potato like connected component in which a path has to be identified.
      Let a {\em pseudo-cycle} be a K-tiling by {\em contour} dominoes such
      that, when cells of colors $1$ and $2$ are discarded:
      \begin{itemize}
        \item colors $5$ to $8$ and $10$ to $3m+12$ form a simple path
          (starting from the $5{\crown}$ cell of the first {\em contour}
          domino, the successor of each cell being either its partner domino
          cell, or a cell of the next domino, with the successor of the color
          $5$ from the last $\hhdomino{3m+12}{5}$ being the first $5{\crown}$
          cell), and
        \item a connected component of color $9$ connects two distinguished
          cells of color $9$: the $9{\crown}$ cell of domino
          $\hdomino{8}{9{\crown}}$, to the $9$ cell of domino
          $\hdomino{9}{10{\crown}}$.
      \end{itemize}
      The length of a pseudo-cycle is the length of its simple path plus the
      length of a shortest path (inside the connected component of color $9$)
      between the two distinguished cells of color $9$.
      %
      \begin{claim}
        \label{claim:contour}
        All {\em contour} dominoes must
        be placed (to reach score $s$), and they must form a pseudo-cycle,
        of length at most $4(6m+1)$.
      \end{claim}
      The pseudo-cycle is connected to the region of color $1$ via three dominoes
      $\hdomino{1}{5{\crown}}$, $\hdomino{8}{1}$, $\hdomino{8}{1}$:
      \begin{itemize}
        \item two of them
          are intended to frame the $\hdomino{6}{7{\crown}}$ domino,
        \item and the last
          one is for parity of cells number, since we have an odd number of
          occupied cells so far that is intended to form a square
          of even side length.
      \end{itemize}
      The pseudo-cycle of {\em contour} dominoes may have the region of
      color $1$ either inside its outer face,
      or inside its inner face (in this case
      the pseudo-cycle surrounds the region of color $1$).

    \item The {\em guide} dominoes enforce that the cycle of {\em contour}
      dominoes surrounds the region of color $1$. 
      \begin{claim}
        \label{claim:guidestack}
        The {\em guide} dominoes must be stacked one after the other in a straight
        segment, rooted at the analogous {\em contour} domino
      \end{claim}
      \begin{claim}
        \label{claim:guidesurrounds}
        The pseudo-cycle of {\em contour} dominoes must have the region
        of color $1$ inside its inner face.
      \end{claim}
      After this step we have $36m^2-3$ occupied cells surrounded
      by a cycle of $4(6m+1)$ cells with three additional cells (with
      color $1$) inside the cycle,
      hence $(6m)^2$ cells inside the cycle which is just long
      enough to make a square of side $6m+2$ around it. However, any other
      shape would either require a too long path, or leave a too small area
      inside, as stated in the following Claim.
      \begin{claim}
        \label{claim:square}
        The maximum area inside a cycle of $4(6m+1)$ cells contains $(6m)^2$ cells,
        and is achieve by a square shape of sides $6m+1$.
      \end{claim}
      \underline{\bf Intermediate conclusion:} at this point we have a square
      of {\em contour} cells with the tower, {\em guardians} and {\em square}
      dominoes inside, three dents of color $1$ inside, one dent of color $2$
      outside, and a stack of {\em guide} dominoes starting from the
      corresponding $\hdomino{6}{7{\crown}}$ {\em contour} domino (the reader
      can refer to Figure~\ref{fig:tiling} for an illustration).\\[.5em]
      We will argue thereafter why the contour is well aligned, with
      {\em contour} dominoes 
      $\hdomino{10}{11{\crown}}$ to
      $\hhdomino{3m+10}{3m+11{\crown}}$
      all on the same side of the square.
    \item The {\em arm} dominoes must create $m$ bins.
      \begin{claim}
        \label{claim:arms}
        The {\em arm} dominoes must be placed into $m+1$ bicolor stacks of length
        $\frac{k}{4}+2$ starting from the corresponding {\em contour} dominoes
        (separated by four positions), and joined with a pair of {\em zipper}
        dominoes.
      \end{claim}
      This also explains why {\em
      contour} dominoes are well aligned, with the third line of {\em contour}
      dominoes all on the same side of the square.\\[.5em]
      \underline{\bf Intermediate conclusion:} after these dominoes the
      K-tiling by $\tau$ with score $s$ must have created $m$ bins (with $m+1$
      arms) of size $4 \times (\frac{k}{4}+2)$. This size explains why the bin
      size (and consequently all item sizes) of the original {\bf 4-Partition
      problem} instance is converted to a multiple of $4$.
    \item Each pair of {\em anchors} dominoes must be placed so that each color
      already present in the contour (from $12$ to $3m+9$) form one region
      because these are the last dominoes with these colors. So there is a pair
      of {\em anchor dominoes} at the rear of each bin.
      Remark that color $3m+13$ has no crown hence it can be split into
      multiple regions. The purpose of this color is to be an anchor inside
      each bin, intended for groups of {\em items} dominoes to match.
    \item For each item $x_i$ we have a group of {\em items} dominoes, where the
      first domino of color $3m+13$ allows to match a bin anchor, and then all
      other dominoes of the group will form one region from this anchored
      domino (with the unique color $3m+13+i$ for each $i$), for a total
      of $x_i$ cells.
      \begin{claim}
        \label{claim:anchors}
        For each bin, at most four groups of {\em items} dominoes
        can match its anchors.
      \end{claim}
      \underline{\bf Intermediate conclusion:} As all $n$ groups of {\em items}
      dominoes must be placed on the board to reach score $s$, we must have
      exactly four groups of {\em items} dominoes in each of the
      $m=\frac{n}{4}$ bins (corresponding to the values of four items from the
      {\bf 4-Partition problem} instance).

    \item The {\em zippers} dominoes have the purpose of closing bins, with one
      pair of zippers matching the colors of each bin's arms. They must
      join the two arms of each bin with a path of length four (because of the
      unique pair of cell colors $3m+13+n+1$ to $4m+13+n$). However this is
      possible if and only if no {\em items} dominoes exceed a volume of $4
      \times (\frac{k}{4}+1)$ inside the bin (leaving the last row of four
      positions of each bin for the zipper), {\em i.e.} each bin contains four
      groups of {\em items} dominoes for a sum of at most $k$ cells (anchors
      already occupy four cells). Observe that when they contain a total of at
      most $\frac{k}{2}$ dominoes, it is always possible to place four groups
      of {\em items} dominoes in a bin and leave the last row for a pair of
      {\em zippers} dominoes (as on the example of Figure~\ref{fig:tiling}).
  \end{enumerate}

  \noindent\underline{\bf Conclusion:} to reach score $s$ with a K-tiling by
  $\tau$, a player must close the zipper on top of each of the $m$ bins and
  therefore trap inside each of them four items of sum at most $k$, for a total
  of $4m=n$ items, therefore solving the {\bf 4-Partition problem} instance.
\end{proof}


We now present the proofs of the different Claims, and recall their statements.

\setcounter{claim}{0}

\begin{claim}
  All {\em contour} dominoes must
  be placed (to reach score $s$), and they must form a pseudo-cycle,
  of length at most $4(6m+1)$.
\end{claim}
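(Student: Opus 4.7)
The plan is to bootstrap the ``single-region per crowned color'' consequence of reaching score $s$ that is already highlighted at the opening of the main proof. First, every contour domino has at least one cell whose color is crowned somewhere in $\tau$: colors $5,6,7,8$ and $9,\dots,3m+12$ carry crowns on the contour block itself, the color-$1$ cells of $\hdomino{1}{5{\crown}}$ and of the two $\hdomino{8}{1}$ join the main color-$1$ region crowned by the guardian $\hdomino{1{\crown}}{1}$, and $\hdomino{8}{2}$ carries a color-$8$ cell which is crowned on $\hdomino{7}{8{\crown}}$. Consequently, omitting any contour domino would strictly drop the achievable score below $s$, so all contour dominoes are placed and each crowned color must end up as a single region containing every one of its placed cells.

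The pseudo-cycle structure then comes from the scarcity of cells per color. Each color $c\in\{6,7,10,11,\dots,3m+11\}$ has only two cells in $\tau$, both on two consecutive contour dominoes; the single-region condition forces these two cells to be adjacent, so the two successive dominoes share a matching-color edge, which is precisely the ``partner or next-domino'' successor step of the simple-path definition. The cyclic closure comes from the last domino $\hhdomino{3m+12}{5}$, whose color-$5$ cell must adjoin an already-placed color-$5$ cell to merge into the three-cell color-$5$ region, looping the successor chain back to the initial $5{\crown}$. Color $8$ is handled in the same spirit: its five cells lie on five consecutive contour dominoes and the chain extension forces them into a snake shape. The color-$1$ and color-$2$ cells of the dominoes $\hdomino{8}{1}$, $\hdomino{8}{1}$, $\hdomino{8}{2}$ decorate this snake laterally --- the color-$1$ cells merge into the main color-$1$ region, and the color-$2$ cell is discarded from the path.

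For color $9$, the $9m-8$ dominoes $\hdomino{9}{9}$ together with $\hdomino{8}{9{\crown}}$ and $\hdomino{9}{10{\crown}}$ must merge into a single region by the same score argument, so their two distinguished $9$-cells lie in one connected component; any shortest path between them inside that component serves as the color-$9$ portion of the pseudo-cycle. The length bound is then pure counting: the $12m+4$ contour dominoes contribute $24m+8$ cells, of which exactly four are colors $1$ or $2$ (one on $\hdomino{1}{5{\crown}}$, two on the pair $\hdomino{8}{1}$, and one on $\hdomino{8}{2}$), leaving $24m+4=4(6m+1)$ path-eligible cells. The simple-path portion uses every path-eligible cell outside color $9$ and the color-$9$ portion uses at most every color-$9$ cell, so the total length is bounded by $4(6m+1)$, with equality exactly when the color-$9$ region is itself a simple path between the two distinguished cells.

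The main obstacle is the case analysis needed to rule out branching configurations: when placing the second $\hdomino{8}{1}$, for instance, the color-$8$ match could in principle attach to the original $8{\crown}$ cell rather than extend the chain, producing a branched pentomino for color $8$ that would preclude the simple-path requirement. Excluding such configurations requires a short induction along the contour sequence showing that at each placement the only color match consistent with later reaching score $s$ is the one extending the chain, and this step deserves the most care.
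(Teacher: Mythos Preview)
Your argument contains a factual error that undermines the core step. You claim that each color $c\in\{6,7,10,11,\dots,3m+11\}$ has only two cells in $\tau$, both on consecutive contour dominoes, and that the single-region requirement therefore forces those two cells to be adjacent. But this is false for essentially every color in your list: colors $6$ and $7$ recur in all $18m^2+12m$ \emph{guide} dominoes; colors $10,11,13,14,\dots,3m+10,3m+11$ recur in the \emph{arm} dominoes; colors $12,15,\dots,3m+9$ recur in the \emph{anchors}; and several recur again in the \emph{zippers}. Since these colors have many later cells, the single-region condition does not by itself force the two contour cells of color $c$ to be adjacent---in principle they could be non-adjacent and later reconnected through guide or arm cells of the same color. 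Only color $3m+12$ genuinely has exactly two cells in all of $\tau$.

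The paper's proof avoids this trap by arguing chronologically from the K-tiling placement rule rather than from the final score. When $\hdomino{6}{7{\crown}}$ is placed, no cell of color $7$ exists yet and the only placed cell of color $6$ is the $6{\crown}$ on the previous contour domino; the placement rule (not the score) therefore forces the $6$-cell to sit adjacent to that $6{\crown}$. The same ``only available match is the previous domino'' reasoning propagates down the contour chain. The score-based single-region argument is invoked only where it actually bites: for color $8$ (which never reappears after the contour block) and for the closing color-$5$ cell of $\hhdomino{3m+12}{5}$. Your counting of the length bound and your treatment of the color-$9$ component are fine, and you correctly flag the color-$8$ branching as the delicate case; but the backbone of the simple-path argument needs to be rebuilt on the placement rule rather than on a cell-scarcity claim that does not hold.
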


\begin{proof}[Proof of Claim~\ref{claim:contour}]
  For the simple path part, by induction on {\em contour} dominoes with some
  cell of color $5$ to $8$ or $10$ to $3m+12$, we have two simple paths
  concatenated by the last $\hhdomino{3m+12}{5}$ domino:
  \begin{itemize}
    \item either there is no occurrence of a color after the {\em
      contour} dominoes hence it must directly form one region (case of
      dominoes with color\footnote{The use of colors $1$ and $2$,
      though already present, changes nothing to this argument.} $8$,
      and color $5$ in the last {\em contour}
      domino which concatenates the two simple paths;
      {\em e.g.} when placing domino $\hhdomino{3m+12}{5}$
      its cell of color $5$ must match the existing region),
    \item or there is no other placed occurrence of one of its
      colors apart from the previous {\em contour} domino (case of all
      other {\em contour} dominoes;
      {\em e.g.} when placing domino $\hdomino{6}{7{\crown}}$
      its cell of color $6$ must match domino $\hdomino{5}{6{\crown}}$),
    \item or, for the $\hdomino{10}{11{\crown}}$ domino, we create the second
      simple path.
  \end{itemize}
  For the connected component of color $9$ the argument is straightforward,
  since the group of $\hdomino{9}{9}$ dominoes must form one region, and be
  connected to the two remaining cells of color $9$, in order to have a unique
  connected component of color $9$.
  For the length calculation, see Figure~\ref{fig:contour}.
\end{proof}

\begin{figure}
  \centerline{\includegraphics{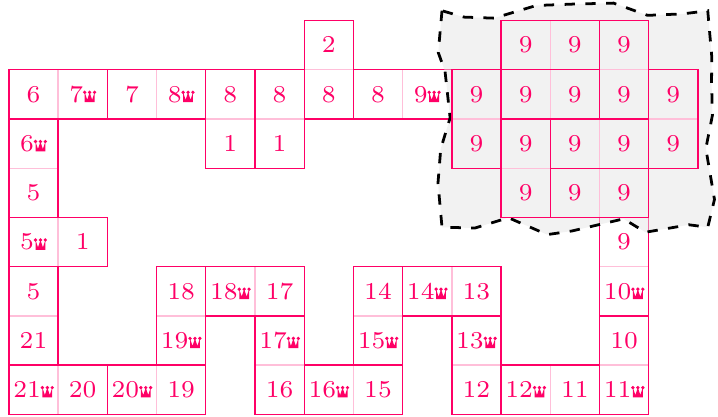}}
  \caption[{\em Contour} dominoes must form a pseudo-cycle]
  {{\em Contour} dominoes must form a pseudo-cycle,
  with a connected component of color $9$
  (dashed, containing the $9m-8$ $\hdomino{9}{9}$ dominoes)
  including a (shortest) path of length at most $2(9m-8)$,
  connecting the two extremities of a simple path containing
  all other cells (excluding cells of color $1$ and $2$).
  Domino by domino, the length of the simple path is therefore
  $1+2+2+2+1+1+1+2$ plus $2+2(3m+1)+4$, giving a total length of
  the pseudo-cycle upper bounded by $4(6m+1)$.}
  \label{fig:contour}
\end{figure}

\begin{claim}
  The {\em guide} dominoes must be stacked one after the other in a straight
  segment, rooted at the analogous {\em contour} domino
\end{claim}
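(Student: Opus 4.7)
The plan is to exploit the fact that colors $6$ and $7$ appear on no domino placed after the {\em guide} stage, so that in order to reach score $s$ both the $6$-cells and the $7$-cells must already each form one connected region by the end of this stage. The only $6$- and $7$-cells present when the {\em guide} stage begins are the two $6$-cells and two $7$-cells of the pseudo-cycle, sitting consecutively along a $1 \times 6$ strip of colors $5,6,6,7,7,8$ carried by the three contour dominoes $\hdomino{5}{6{\crown}}$, $\hdomino{6}{7{\crown}}$, $\hdomino{7}{8{\crown}}$.

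I would first settle the base case: the very first {\em guide} domino must be placed adjacent to one of these three contour dominoes, since it has to match an existing $6$- or $7$-cell. Attaching it to $\hdomino{5}{6{\crown}}$ (resp.\ $\hdomino{7}{8{\crown}}$) puts the guide's $7$-cell (resp.\ $6$-cell) against either empty space or a cell of color $5$ (resp.\ $8$), leaving it initially disconnected from the main $7$-region (resp.\ $6$-region). A local analysis then shows that this singleton can never be reconnected later: every subsequent cell of its color must come from another $\hdomino{6}{7}$ domino, whose partner cell carries the opposite color and raises a $6$/$7$ separation wall that blocks any would-be bridge. The only attachment that avoids such a trapped singleton is the placement opposite $\hdomino{6}{7{\crown}}$ with the guide's $6$-cell against the contour $6$ and its $7$-cell against the contour $7$; this gives the first link of the stack.

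For the inductive step, suppose that after $k$ guides we have a straight $2 \times (k+1)$ block rooted at $\hdomino{6}{7{\crown}}$, with one column monochromatic in $6$, the other monochromatic in $7$, both open only at the far end. The $(k+1)$-th guide must adjoin a $6$- or $7$-cell on the boundary of this block, and the argument of the base case applied at the far end rules out every placement except the one that extends the block by one more row in the same direction and orientation: any rotated, shifted, or laterally-branching choice either fails the matching rule or produces a monochromatic singleton of color $6$ or $7$ that, by the same pocket argument as above, can never merge back into its region. The main obstacle is precisely this case analysis of near-miss first-guide placements, each of which must be shown to doom one of the two regions to permanent disconnection; once this rigidity is established, the induction is short and the full stack of $18m^2+12m$ guides is forced into the claimed straight segment rooted at $\hdomino{6}{7{\crown}}$.
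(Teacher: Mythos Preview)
Your overall strategy matches the paper's: show that any deviation from the straight stack leaves a cell of color $6$ or $7$ permanently severed from its home region, and conclude by induction on the number of placed {\em guide} dominoes. Two points, however, need repair.

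First, your base case assumes the three contour dominoes $\hdomino{5}{6{\crown}}$, $\hdomino{6}{7{\crown}}$, $\hdomino{7}{8{\crown}}$ sit on a straight $1\times 6$ strip. Claim~\ref{claim:contour} only gives a simple path through these cells, not straightness: the pseudo-cycle may bend at the $6$--$6$ or $7$--$7$ junctions, and the paper explicitly treats several local configurations (see its Figure for the base case). Your local analysis of where the first guide can attach must therefore be carried out for each of these configurations, not just the straight one.

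Second, and more importantly, the assertion that a mis-placed singleton ``can never be reconnected later'' because the partner cell ``raises a $6/7$ separation wall'' is the heart of the claim, and your pocket argument does not establish it. A wall of one color can in principle be circumvented by a long detour; what actually prevents reconnection is an \emph{alternation} phenomenon: any path of $\hdomino{6}{7}$ dominoes that reconnects the stray $c$-cell to its home region must contain a turn, and at that turn a cell of the complementary color $\bar c=13-c$ becomes trapped on the wrong side of the $c$-path. The paper formalizes this with the notions of \emph{zigzag-separated} (no straight segment of free and same-color cells reaches the home region) and \emph{alternating-zigzag-separated} (any reconnection creates a zigzag-separated $\bar c$-cell), and proves that zigzag-separated implies alternating-zigzag-separated. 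Without this alternation step your induction does not close: you would need to argue, not that one wall blocks a bridge, but that every conceivable bridge for color $c$ spawns a new obstruction for color $\bar c$, ad infinitum. Once you add this ingredient, your induction on the stack length becomes essentially the paper's argument.
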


\begin{proof}[Proof of Claim~\ref{claim:guidestack}]
  {\em Guide} dominoes are the last occurrences of colors $6$ and $7$, hence
  each of these colors must end up forming one region.
  The basic idea behind this proof is that when a $\hdomino{6}{7}$ domino is
  not correctly stacked it creates two regions of a color, which has to be
  reconnected via $\hdomino{6}{7}$ dominoes, which necessarily creates two
  regions of the other color, which has to be reconnected via $\hdomino{6}{7}$
  dominoes, which necessarily creates two regions of the other color, {\em
  etc}, hence the two colors cannot simultaneously end up forming one region.

  \medskip

  To formalize this intuition, we introduce some definitions.
  Let the {\em free outside region} denote the infinite set of grid cells with
  no domino on them ({\em free cells}), which are connected together.
  Given a K-tiling by some prefix of $\tau$ until some {\em guide} dominoes
  (between $1$ and $18m^2+12m$), let the {\em home-region} of color $c \in
  \{6,7\}$ be the region of the cell of color $c$ from the
  $\hdomino{6}{7{\crown}}$ {\em contour} domino.

  We say that a cell of color $c \in \{6,7\}$ is {\em zigzag-separated} from
  its home-region when:
  \begin{itemize}
    \item it does not belong to its home-region, and
    \item there does not exist a straight segment\footnote{A {\em straight
      segment} being a set of cells of the form
      $\{(x,y),(x,y)+e,(x,y)+2e,\dots,(x,y)+ze\}$ for some $x,y \in \Z$, $z \in
      \N$ and $e \in \{(0,1),(1,0),(0,-1),(-1,0)\}$.} of free cells and cells
      of color $c$, connecting it to its home-region.
  \end{itemize}
  We say that a cell of color $c \in \{6,7\}$ is {\em
  alternating-zigzag-separated} from its home-region when, for $\bar{c}=13-c$:
  \begin{itemize}
    \item it does not belong to its home-region, and
    \item any continuation of the K-tiling (placement of more $\hdomino{6}{7}$ {\em
      guide} dominoes) connecting it to its home-region would leave a cell of
      color $\bar{c}$ zigzag-separated from the home-region of color $\bar{c}$.
  \end{itemize}
  
  Let us now argue that, given a K-tiling by some prefix of $\tau$ until some
  {\em guide} dominoes (between $1$ and $18m^2+12m$):
  \begin{enumerate}[label={\it\alph*.}]
    \item\label{i:induction} if a cell of color $c \in \{6,7\}$ is
      zigzag-separated, then it is alternating-zigzag-separated,
    \item\label{i:base} if a {\em guide} domino is not correcly stacked as
      stated in the Claim, then a cell of color $c \in \{6,7\}$ is
      zigzag-separated.
  \end{enumerate}
  The combination of Items~\ref{i:base} (base case) and~\ref{i:induction}
  (induction) allows to conclude that it would be impossible to end up having
  one region of each color, since there would always exist a cell disconnected
  from its home-region. As a consequence, in order to reach score $s$, the
  Claim must hold.

  Item~\ref{i:base} follows a simple case disjunction presented on
  Figure~\ref{fig:guidestackbase}, and Item~\ref{i:induction} holds since in order
  to have one region of each color and reach score $s$, it is necessary to
  connect the zigzag-separated cell $p$ to its home-region. However, since this
  path is not a straight segment, it contains some turn which leaves a cell
  $\bar{p}$ of color $\bar{c}$ zigzag-separated from the home-region of color
  $\bar{c}$ (the path connecting $p$ to the home-region of color $c$ separates
  $\bar{p}$ from the home-region of color $\bar{c}$, see
  Figure~\ref{fig:guidestackinduction}).
\end{proof}

\begin{figure}
  \centerline{\includegraphics{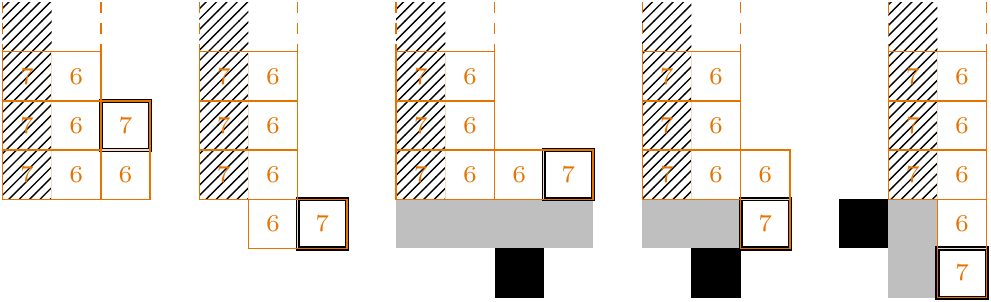}}
  \vspace*{.5cm}
  \centerline{\includegraphics{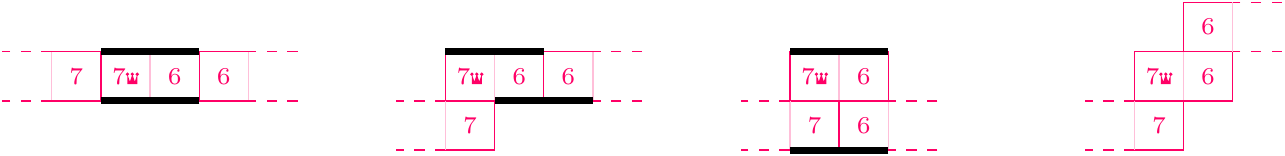}}
  \caption[guidestackbase]{
    Top: up to rotation, symmetry and swap of colors, five possible ways to
    not correctly stack a {\em guide} domino; in the two first cases a cell
    of color $7$ (bold) is zigzag-separated from its home-region (hatched);
    and in the three other cases, if $\hdomino{6}{7}$ {\em guide} dominoes
    are placed so that a straight segment (grey) connects the cell of color
    $7$ (bold) to its home-region (hatched), then there would necessarily
    exist a cell of color $6$ (black) zigzag-separated from its home-region.
    Hence in any case there exists a zigzag-separated cell.
    Bottom:  up to rotation and symmetry, possible stack positions are
    highlighted,
    depending on the placement of {\em contour} dominoes.
    For any stack position and number of correctly stacked
    dominoes, the argument on the top holds.
  }
  \label{fig:guidestackbase}
\end{figure}

\begin{figure}
  \centerline{\includegraphics[width=\textwidth]{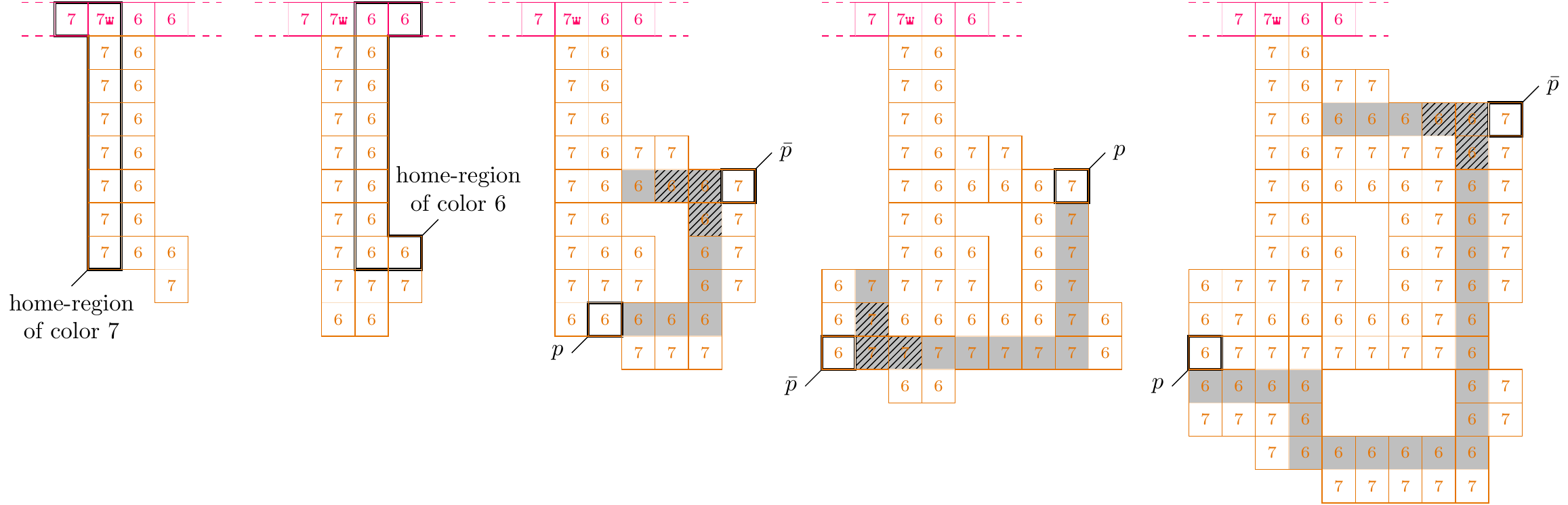}}
  \caption[guidestackinduction]{
    In order to reconnect a zigzag-separated cell $p$ of color $c \in
    \{6,7\}$ to its home-region,
    a zigzag-separated cell $\bar{p}$ of color $\bar{c}=13-c$ is
    created. Indeed, any path (grey) of cells of color $c \in \{6,7\}$
    connecting cell $p$ to its home-region acts as a
    barrier, and it contains some turn (dashed) which, due to the use of
    $\hdomino{6}{7}$ {\em guide} dominoes, necessarily leaves at least one
    cell $\bar{p}$ of color $\bar{c}$ zigzag-separated from the home-region
    of color $\bar{c}$. As a consequence, zigzag-separated cell $p$ is
    alternating-zigzag-separated.
  }
  \label{fig:guidestackinduction}
\end{figure}

\begin{claim}
  The pseudo-cycle of {\em contour} dominoes must have the region
  of color $1$ inside its inner face.
\end{claim}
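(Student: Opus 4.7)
The plan is a proof by contradiction. Assume the color $1$ region lies in the outer face of the pseudo-cycle.

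First, by Claim~\ref{claim:guidestack}, the \emph{guide} dominoes form a single straight segment of $2(18m^2+12m)=36m^2+24m$ cells, rooted at the contour domino $\hdomino{6}{7{\crown}}$. By Claim~\ref{claim:square} (whose proof does not depend on the present claim), the inner face of any such pseudo-cycle contains at most $(6m)^2=36m^2$ cells. Since the guide segment is contiguous, begins on a contour cell, and contains strictly more than $36m^2$ cells, it cannot lie inside the pseudo-cycle; hence the segment extends entirely into the outer face.

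Next I would exploit the fact that the three contour dominoes carrying a cell of color $1$ (namely $\hdomino{1}{5{\crown}}$ and the two $\hdomino{8}{1}$) lie within three positions of $\hdomino{6}{7{\crown}}$ in the contour sequence, so the arc of the pseudo-cycle containing these four dominoes is short. Under our hypothesis, each of the three color-$1$ cells protrudes into the outer face and is adjacent to the blob, so the blob is pressed against this short arc from outside, while by the previous paragraph the guide segment also leaves $\hdomino{6}{7{\crown}}$ perpendicularly and into the outer face.

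The hard part will be turning this local picture into a rigorous contradiction. I would proceed by a finite case analysis on (a) the orientations of the six contour dominoes $\hdomino{1}{5{\crown}}$, $\hdomino{5}{6{\crown}}$, $\hdomino{6}{7{\crown}}$, $\hdomino{7}{8{\crown}}$, $\hdomino{8}{1}$, $\hdomino{8}{1}$ around $\hdomino{6}{7{\crown}}$ (up to lattice symmetries), and (b) which of the two perpendicular directions the guide segment leaves $\hdomino{6}{7{\crown}}$. In each configuration I expect that either (i) the first cell of the guide segment coincides with a blob cell, so the first guide domino cannot even be placed, or (ii) routing the full straight stack of $36m^2+24m$ cells through the outer face without crossing the blob forces the pseudo-cycle to have length strictly greater than $4(6m+1)$, contradicting Claim~\ref{claim:contour}. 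Either way the K-tiling fails to reach score $s$, and therefore the color $1$ region must lie in the inner face.
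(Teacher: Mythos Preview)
Your high-level strategy is exactly the paper's: assume the colour-$1$ region lies in the outer face and derive a contradiction from the fact that the $18m^2+12m$ stacked {\em guide} dominoes cannot all be placed. Your treatment of the inner face is fine; using the area bound from Claim~\ref{claim:square} (which is indeed independent of the present claim) is a legitimate variant of the paper's diameter-style bound $\tfrac{4(6m+1)}{2}$.

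The divergence is in the outer-face case. The paper does \emph{not} perform a local case analysis on the orientations of the six contour dominoes around $\hdomino{6}{7{\crown}}$. Instead it observes that the three colour-$1$ cells of the contour sit on both sides of $\hdomino{6}{7{\crown}}$ along the pseudo-cycle and are connected to the big colour-$1$ blob lying in the outer face; together with the short contour arc through $\hdomino{6}{7{\crown}}$, this traps the outward guide stack in a bounded pocket. The paper then upper-bounds the number of guide dominoes that fit in this pocket by the number $18m^{2}-6$ of {\em square} dominoes (essentially the size of the blob), and since $\tfrac{4(6m+1)}{2}+ (18m^{2}-6) < 18m^{2}+12m$ the stack cannot be completed either way. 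So the paper's outer-face step is a single global counting argument, backed by a figure, rather than an enumeration of local configurations.

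Your outer-face plan, by contrast, is not yet a proof. Case~(i) may well cover many configurations, but case~(ii) --- ``routing the full straight stack \dots\ forces the pseudo-cycle to have length strictly greater than $4(6m+1)$'' --- is asserted without mechanism. The length bound of Claim~\ref{claim:contour} is an \emph{upper} bound that comes purely from how the contour dominoes chain; it is not clear why the mere existence of a long straight corridor in the outer face would push the contour's length above it. To make this line work you would have to exhibit, in each surviving configuration, extra path length actually consumed by the contour cells themselves, and that is precisely what your case analysis promises but does not deliver. The paper sidesteps all of this by bounding the pocket directly with the size of the colour-$1$ region.
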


\begin{proof}[Proof of Claim~\ref{claim:guidesurrounds}]
  Observe that if the
  pseudo-cycle of {\em contour} dominoes
  (Claim~\ref{claim:contour})
  has the region of color $1$ on its outer
  face, then the $18m^2+12m$ stacked $\hdomino{6}{7}$ {\em guide} dominoes
  (Claim~\ref{claim:guidestack})
  would have been placed inside
  the pseudo-cycle (at most $\frac{4(6m+1)}{2}$ of them) or inside the region of
  color $1$ (at most $18m^2-6$ of them), but they are too numerous
  so it would be impossible to have simultaneously a unique
  region for colors $6$ and $7$ (see Figure~\ref{fig:guide-number}).
\end{proof}

\begin{figure}
  \centerline{\includegraphics{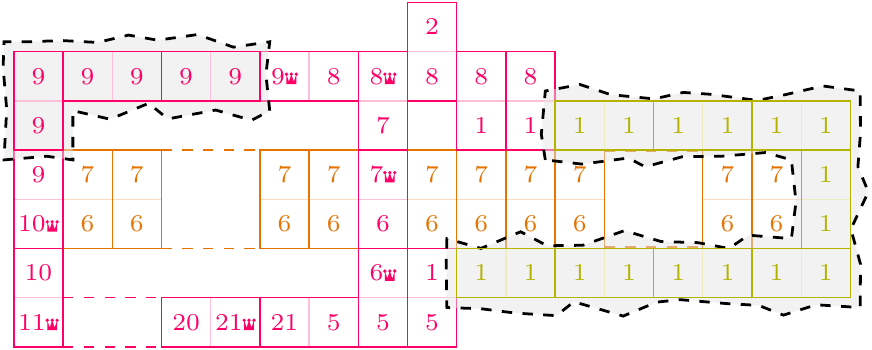}}
  \caption[The pseudo-cycle of {\em contour} dominoes must have the region of
  color $1$ on its inner face.]{If the pseudo-cycle of {\em contour} dominoes
  has the region of color $1$ on its outer face, then one cannot stack
  all the $18m^2+12m$ $\hdomino{6}{7}$ dominoes (the region of $1$
  contains $2(18m^2-6)$ cells, and the contour contains $4(m+1)$
  cells).}
  \label{fig:guide-number}
\end{figure}

\begin{claim}
  The maximum area inside a cycle of $4(6m+1)$ cells contains $(6m)^2$ cells,
  and is achieve by a square shape of sides $6m+1$.
\end{claim}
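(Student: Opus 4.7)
I would prove a discrete isoperimetric inequality for simple cycles of cells in $\Z^2$ via the bounding box. Let $C$ be a cycle of $L = 4(6m+1)$ cells with bounding box of dimensions $a \times b$, and let $A$ denote the number of cells strictly enclosed by $C$. Since any cell outside the bounding box reaches infinity without crossing $C$ and hence lies in the unbounded component of $\Z^2 \setminus C$, every interior cell belongs to the bounding box, so $A + L \leq ab$.

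Next I would prove the cycle-length lower bound $L \geq 2a + 2b - 4$. Write $L = H + V$, where $H$ and $V$ count the horizontal and vertical cycle edges respectively. For each intermediate horizontal line $y = y_{\min} + j - \tfrac12$ with $1 \leq j \leq b-1$, the cycle edges crossing it are exactly the vertical edges between rows $y_{\min}+j-1$ and $y_{\min}+j$. Define $f(c) = 1$ if the row of $c$ is at least $y_{\min}+j$ and $f(c) = 0$ otherwise; then a crossing edge is precisely a sign change of $f$ along the cyclic sequence $c_0, \ldots, c_{L-1}, c_0$. Cyclicity forces the number of sign changes to be even, and it is strictly positive because $f$ takes both values (by definition of the bounding box). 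Summing over the $b-1$ intermediate lines gives $V \geq 2(b-1)$; the symmetric argument on columns gives $H \geq 2(a-1)$, and $L \geq 2a + 2b - 4$.

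Combining both inequalities with AM-GM,
\[
  A \leq ab - L \leq ab - (2a+2b-4) = (a-2)(b-2) \leq \left(\tfrac{a+b-4}{2}\right)^2,
\]
and $a+b \leq (L+4)/2$ then yields $A \leq ((L-4)/4)^2 = (L/4-1)^2 = (6m)^2$. Tracking equality, the AM-GM step forces $a = b$, the packing bound forces $A + L = ab$, and the edge-count bound forces $L = 2a + 2b - 4$; with $L = 4(6m+1)$ this gives $a = b = 6m+2$, and $C$ must be the perimeter of a $(6m+2) \times (6m+2)$ square of cells whose interior is the claimed $6m \times 6m$ square.

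The only delicate point is the parity of the crossings, which I handle combinatorially via sign changes of an indicator function along the cyclic sequence, so no appeal to a discrete Jordan curve theorem is required.
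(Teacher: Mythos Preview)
Your argument is correct, and it proceeds along a genuinely different line from the paper. The paper first argues informally that any non-rectangular cycle can be improved by ``reversing an $L$-shape'' so as to enlarge the enclosed area without changing the length, thereby reducing to axis-aligned rectangles; it then fixes the perimeter $2(a+b)=4(6m+1)$ and maximizes the product $ab$ by elementary calculus. Your route avoids both steps: the bounding-box inequality $A+L\le ab$ together with the crossing-count bound $L\ge 2a+2b-4$ and AM--GM yields $A\le((L-4)/4)^2=(6m)^2$ directly, with no reduction to rectangles and no differentiation. This is more self-contained and, in particular, makes the reduction step rigorous where the paper's $L$-shape reversal is only sketched. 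Your equality analysis also pins down the extremal shape as the perimeter of a $(6m+2)\times(6m+2)$ block of cells, which matches the paper's ``square of sides $6m+1$'' once one reads the side length in unit edges rather than in cells. One cosmetic remark: in your chain $ab-L\le ab-(2a+2b-4)$ you invoke $L\ge 2a+2b-4$, and then use the same inequality again at the end to bound $a+b$; this is harmless but slightly redundant, and you could reach the same conclusion in one stroke via $A\le ab-L\le((a+b)/2)^2-L\le((L+4)/4)^2-L=((L-4)/4)^2$.
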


\begin{proof}[Proof of Claim~\ref{claim:square}]
  First remark that it is enough to consider rectangular shapes, because any
  {\em L-shape} croping some part of the inside area can be reversed to include
  this area instead of excluding it (hence increasing striclty the area inside
  the cycle). Then we have a rectangle of sides $a$ and $b$ (from $2$ to
  $2(6m+1)$), such that $2(a+b)=4(6m+1)$ is fixed, and we want to maximize its
  area, {\em i.e.} the product $ab$. It follows that $b=2(6m+1)-a$ and we want
  to maximize $a(2(6m+1)-a)$, a quadradic equation whose derivative reaches zero
  at $a=6m+1$, hence the area is maximum for a square shape of sides $6m+1$,
  containing $(6m)^2$ cells.
\end{proof}

\begin{claim}
  The {\em arm} dominoes must be placed into $m+1$ bicolor stacks of length
  $\frac{k}{4}+2$ starting from the corresponding {\em contour} dominoes
  (separated by four positions), and joined with a pair of {\em zipper}
  dominoes.
\end{claim}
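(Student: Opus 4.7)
The plan is to argue the structure of the arm placements color-pair by color-pair, then combine these via geometric constraints on how the $m+1$ resulting stacks sit relative to the contour.

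First, I would set up the color counting: for each $i \in \{0,1,\dots,m\}$, the pair $(c_1,c_2)=(3i+10,3i+11)$ has exactly one crowned cell per color among the already-placed \emph{contour} dominoes, and (apart from at most one \emph{zipper} cell to be placed later) only the $\frac{k}{4}+2$ \emph{arm} dominoes of type $\hdomino{c_1}{c_2}$ carry those colors. Since reaching score $s$ requires each such color to form a single region, all these arm dominoes must be placed and both $c_1$ and $c_2$ must end up as single connected regions. The contour already supplies a 4-cell ``foot'' for the pair, namely two adjacent $c_1$ cells next to two adjacent $c_2$ cells (from $\hdomino{c_1-1}{c_1{\crown}}$, $\hdomino{c_1}{c_2{\crown}}$, $\hdomino{c_2}{c_2+1{\crown}}$), sitting on the outer boundary of the $(6m+2)\times(6m+2)$ square established by Claims~\ref{claim:contour}--\ref{claim:square}.

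Next, I would adapt the zigzag-separation argument used for Claim~\ref{claim:guidestack} to bicolor dominoes. Because the interior of the square is saturated by the \emph{guardians}, tower, \emph{square} dominoes and the three dents, the arm dominoes must be placed outside the contour. Starting from the 4-cell foot, any placement of an $\hdomino{c_1}{c_2}$ domino that is not part of a straight bicolor stripe extending perpendicularly outward from the contour would either leave some $c_1$ cell zigzag-separated from the two contour $c_1$ cells, or symmetrically some $c_2$ cell zigzag-separated from its contour cells. The same alternating argument then shows that reconnecting one color creates a barrier separating a cell of the other color from its home-region, so the only configuration in which both colors simultaneously form single regions is a $2\times(\frac{k}{4}+2)$ rectangle of arm dominoes oriented so that one column is all $c_1$ and the other is all $c_2$, rooted at the two inner cells of the contour foot.

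Finally, I would pin down the global placement of the $m+1$ stacks. The arm-related contour dominoes $\hdomino{10}{11{\crown}},\dots,\hhdomino{3m+10}{3m+11{\crown}}$ form a consecutive stretch of $3m+1$ contour dominoes, covering $6m+2$ cells, which is precisely the length of one side of the outer square. If this stretch were to wrap around a corner of the contour cycle, two arm stacks would extend in perpendicular directions and either collide or bound an L-shaped bin incompatible with the 4-cell path of a \emph{zipper} pair bridging its two arms; an area accounting rules this out and forces all arm feet onto one side. Between consecutive arm feet, the contour contributes exactly two dominoes $\hdomino{3i+11}{3i+12{\crown}}$ and $\hdomino{3i+12}{3i+13{\crown}}$, giving a 4-cell gap between stacks and hence $m$ bins of width $4$. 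The joining by a pair of \emph{zipper} dominoes is then forced by the uniqueness of the colors $3m+13+n+j$, which appear only in zipper pair $j$ and must therefore bridge the right column of arm $j-1$ and the left column of arm $j$ at their outer ends. I expect the main obstacle to be the bicolor zigzag-separation step: the guide analysis exploited a symmetric partition between colors $6$ and $7$, whereas the arm case must also account for the specific $c_1,c_1,c_2,c_2$ shape of the contour foot and for the already-stacked \emph{guide} dominoes occupying one outward direction along the contour, likely requiring figures analogous to Figures~\ref{fig:guidestackbase} and~\ref{fig:guidestackinduction}.
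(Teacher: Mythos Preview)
Your overall strategy matches the paper's: reduce to the zigzag-separation argument of Claim~\ref{claim:guidestack}, with the contour square restricting the base case. However, you have not closed the gap you yourself flag at the end, and the paper's resolution of that gap is not the one you anticipate.

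The issue is the extra \emph{zipper} cell. Your second paragraph runs the alternating zigzag argument as if the only future occurrences of colors $c_1,c_2$ were in $\hdomino{c_1}{c_2}$ dominoes. But for each arm pair (except the two extremal ones) there is later a single loose cell of $c_1$ and a single loose cell of $c_2$ on \emph{zipper} dominoes, and a single free cell is exactly enough to defeat the zigzag barrier once. Concretely, the paper's base-case analysis (Figure~\ref{fig:armstackbase}) shows one placement of the first arm domino for which both colors \emph{can} be reconnected into single regions using the later zipper cell: the arm stack is shifted by one position along the contour. Your argument, as written, would incorrectly declare this configuration impossible.

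The paper rules out this shifted configuration by a chain argument you do not mention: if the zipper cell of color $c_2$ is spent rescuing a shifted arm, then the other end of that zipper pair (via its unique intermediate color) must attach to the next arm's color, which forces the \emph{next} arm stack to be shifted in the same way; but then the two \emph{anchor} dominoes carrying the intervening contour color $c_2+1$ can no longer reach their contour cells and are lost, so score $s$ is unreachable. This interaction between zippers and anchors is the missing idea in your proposal; the pure bicolor zigzag argument is insufficient on its own.

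A secondary point: your third paragraph argues alignment of the $m+1$ stacks via an L-shaped-bin/area count. The paper does not need this, because once the base case is pinned down as above, the only admissible stack root is the inner two cells of the contour foot, and the contour is already a square by Claims~\ref{claim:contour}--\ref{claim:square}; alignment then follows directly from the positions of the feet along one side.
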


\begin{proof}[Proof of Claim~\ref{claim:arms}]
  For each group of $\frac{k}{4}+2$ {\em arm} dominoes the argument is
  analogous to the proof of Claim~\ref{claim:guidestack},
  with two differences.
  \begin{itemize}
    \item For the base case, possible stack positions
      (Figure~\ref{fig:guidestackbase} for {\em guide} dominoes) are more
      restricted, due to the established square shape of {\em contour} dominoes
      (Claims~\ref{claim:contour} and~\ref{claim:square}) surrounding {\em
      square} dominoes (Claim~\ref{claim:guidesurrounds}).
    \item There is one extra cell of colors $11,13,14,16,\dots,3m+8,3m+10$ after {\em arm}
      dominoes, in {\em zippers} dominoes.
  \end{itemize}
  We now present how to adapt the argumentation structure from the proof of
  Claim~\ref{claim:guidestack}, in order to take into account these two differences.

  First, we can already notice that each pair of {\em zippers} dominoes contain
  a unique color (between $3m+13+n+1$ and $4m+13+n$), hence in order to have
  one region of each color and reach score $s$ they must form paths of length
  four, whose ends are connected to the regions of the respective colors
  (between $11$ and $3m+10$).

  Contrary to the four base cases presented on Figure~\ref{fig:guidestackbase}
  (bottom), only the two first cases are now possible, as presented on
  Figure~\ref{fig:armstackbase}, which also considers all ways to not correctly
  stack a first {\em arm} domino. It shows that there is only one case where an
  extra {\em zippers} cell may be useful, and that it leads to the
  impossibility to reach score $s$. As a consequence, the first domino of each
  arm must be correctly stacked.

  The induction is identical to the proof of Claim~\ref{claim:guidestack}: one
  can observe on Figures~\ref{fig:guidestackbase} (top)
  and~\ref{fig:guidestackinduction} that an extra {\em zippers} cell of color
  $6$ or $7$ is not enough to get one region of each color. As a consequence,
  for any couple of colors $3p+10,3p+11$ with $0 \leq p \leq m$, the sequence
  of $\frac{k}{4}+2$ $\hhdomino{3p+10}{3p+11}$ {\em arm} dominoes must also be
  arranged as a stack starting from the corresponding {\em contour} dominoes.

  The remark first made about {\em zippers} corresponds to the
  second part of the Claim.
\end{proof}

\begin{figure}
  \centerline{\includegraphics[width=\textwidth]{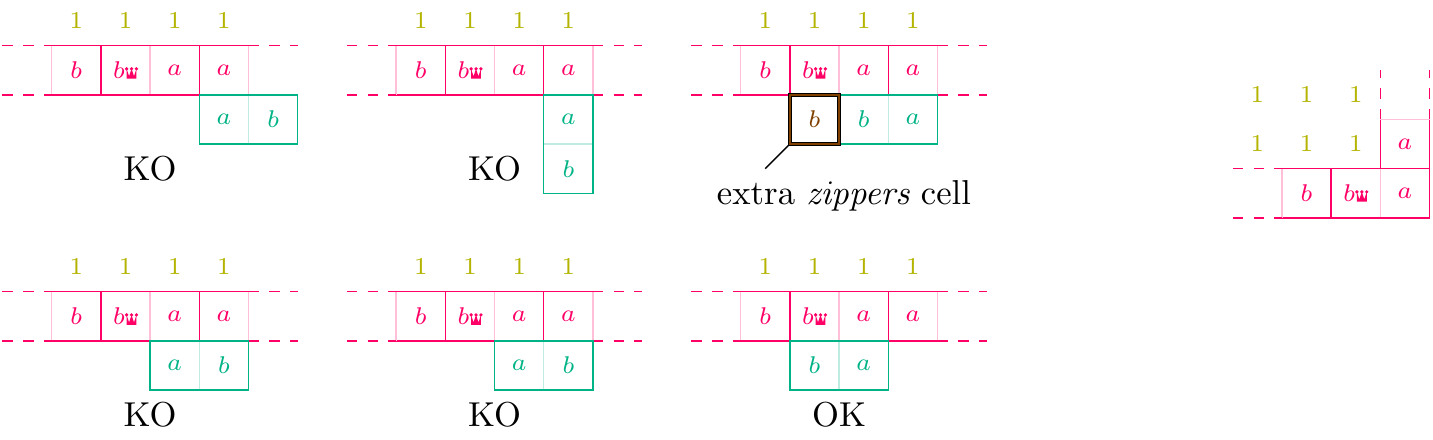}}
  \vspace*{1.5cm}
  \centerline{\includegraphics[width=\textwidth]{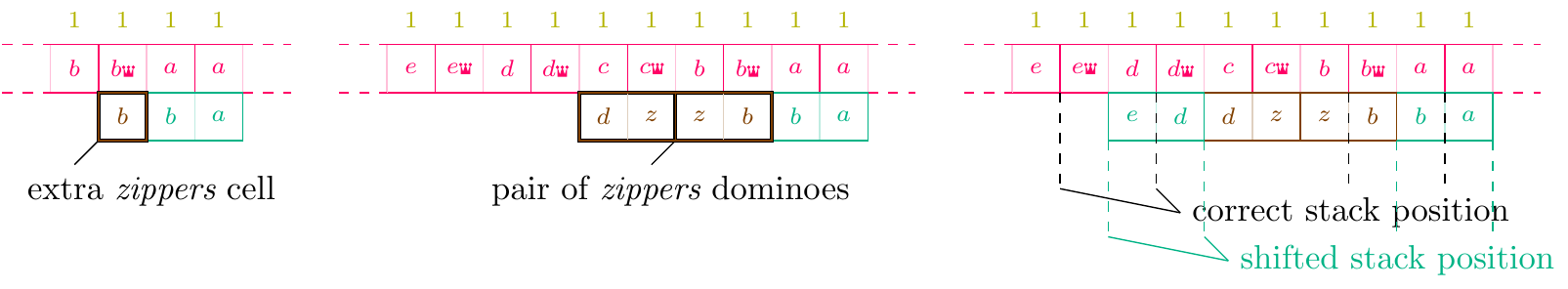}}
  \caption[{\em Arm} dominoes must be correctly stacked.]{
    Presentation of the base case for Claim~\ref{claim:arms}.
    Top and middle: given that {\em contour} dominoes form a square surrounding {\em
    square} dominoes (Claims~\ref{claim:contour}, \ref{claim:guidesurrounds}
    and~\ref{claim:square}), there are only two cases to consider for the
    positionning of {\em arm} dominoes.
    The first six configurations (left) present, up to rotation, symmetry and swap of
    colors, all placements of the first dominoe in the first case for some
    arm domino of colors $a,b$ with $a \in \{10,13,16,\dots,3m+10\}$ and
    $b=a+1$. One can see that an extra {\em zippers} cell of color $b$ may be
    useful to get one region of each color $a$ and $b$ only in one
    configuration (shown in the bottom part).
    For the others, four marked ``KO'' require more {\em arm} dominoes, and
    one marked ``OK'' corresponds to a correctly stacked {\em arm} domino.
    The second case is presented on the last configuration (right),
    any placement of a first
    {\em arm} domino other than the intended one
    leads to an impossibility by a reasoning analogous to
    the first cases (note that in this second case there is no color symmetry).
    Bottom: when a first {\em arm} domino is not correctly stacked but we can
    get one region of each color $a$ and $b$ using the extra {\em zippers}
    cell of color $b$ (left), the couple of {\em zippers} dominoes (via
    color $z=3m+13+n+\frac{a-7}{3}$ appearing only in this couple)
    must be connected to color
    $d=c+1$, with $c=b+1$, on the other end (middle). This is possible only
    with an {\em arm} domino of the next group ($e=d+1$) also not correctly
    stacked (right, with the same ``shift'' of the stack of {\em arm}
    dominoes, sketched). However, this prevents the two {\em anchors}
    dominoes containing cells of color $c$ to match {\em contour} dominoes of color
    $c$, hence these two {\em anchors} dominoes are lost and score $s$ cannot be reached.
  }
  \label{fig:armstackbase}
\end{figure}

%

\begin{claim}
  For each bin, at most four groups of {\em items} dominoes
  can match its anchors.
\end{claim}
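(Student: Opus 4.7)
The plan is to combine a local geometric observation about the items group with the size constraint inherited from the $4$-Partition instance.

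First, I would observe that the first domino of each items group is $\hhdomino{3m+13}{3m+13+i{\crown}}$, whose color $3m+13+i$ is fresh (it appears nowhere earlier in $\tau$). Hence the first domino can be placed only if its color-$3m+13$ cell is adjacent to an already-placed color-$3m+13$ cell. The whole group --- the first domino plus the $\frac{x_i}{2}-1$ subsequent dominoes of color $3m+13+i$ --- forms a single connected region of $x_i$ cells attached to the first domino, because color $3m+13+i$ occurs in no other domino. Since the bin is enclosed on three sides (the contour on top and the two arms on the sides) and, by the assumption that score $s$ is reached, must eventually be closed on the fourth side by the pair of zippers, this $x_i$-cell region must lie inside the bin to which the first domino is anchored.

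Second, I would invoke the $4$-Partition hypothesis $\frac{k}{5}<x_i<\frac{k}{3}$ (a feature of the strong $\NP$-complete variant), which is preserved by the uniform scaling by $28$: the new sizes still satisfy $\frac{k}{5}<x_i$ in the scaled instance. A bin contains $4\cdot(\frac{k}{4}+2)=k+8$ cells; $4$ of them are occupied by the anchor pair, and (as argued for the zippers in the main proof) $4$ more must remain free for the pair of zippers to form the length-four path joining the two arms. Hence at most $k$ cells are available to items dominoes inside a single bin. Any five items would have a combined footprint strictly greater than $5\cdot \frac{k}{5}=k$ cells, which cannot fit. So at most four groups of items dominoes can match the anchors of a single bin.

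The main obstacle I anticipate is rigorously ruling out that a group's continuation escapes the bin through its still-open bottom side at the moment the items dominoes are placed (before the zippers). The argument I would give is that if a group's color-$3m+13+i$ region spilled out of its bin, then the corresponding pair of zippers of that bin could no longer be arranged as a length-four path between the two arms matching the required colors; this would either split the region of some arm color or prevent the zipper's unique color from forming a single region, so the target score $s$ would be unreachable, contradicting the assumption.
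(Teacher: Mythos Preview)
Your argument rests on the hypothesis $\frac{k}{5}<x_i<\frac{k}{3}$, but the paper's statement of {\bf 4-Partition problem} does \emph{not} include this restriction: the only assumption on the sizes is $x_i>0$ (and $\sum_i x_i = km$). Consequently, after the $\times 28$ rescaling the items are only guaranteed to satisfy $x_i \geq 28$, not $x_i>\frac{k}{5}$. Your volume count then collapses: for a large bin size $k$, five (or many more) groups of $28$ cells each certainly fit within the $k$ available cells, so nothing in your argument prevents a fifth group from matching an anchor. In short, you have imported a hypothesis the reduction does not provide, and the step ``five items would have combined footprint $>5\cdot\frac{k}{5}=k$'' is unsupported here.

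The paper's proof is not a volume argument at all but a local reachability argument about the anchor color $3m+13$. The only cells of color $3m+13$ in a bin are the two coming from the pair of {\em anchors} dominoes plus one per already placed items group (the first domino of each group). Because every group contains at least $14$ dominoes (this is exactly what the $\times 28$ rescaling buys, and is the point your approach leaves unexplained), a counting of occupied cells shows that after four groups the eighth row of the bin is entirely filled with {\em items} cells, while every cell of color $3m+13$ still lies at or below the seventh row. Hence no free position adjacent to a color-$3m+13$ cell remains, and a fifth first domino $\hhdomino{3m+13}{3m+13+i{\crown}}$ simply cannot be placed on that bin's anchors --- independently of any later consideration about zippers or total volume. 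If you wish to keep a volume-style argument, you would have to change the reduction itself to start from the restricted variant of {\bf 4-Partition} with $\frac{k}{5}<x_i<\frac{k}{3}$; within the paper's setup, the geometric argument is what is needed.
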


\begin{proof}[Proof of Claim~\ref{claim:anchors}]
  Each item has size at
  least $28$ and therefore corresponds to at least $14$ dominoes, however
  after placing a pair of {\em anchors} dominoes and four of these
  minimum size
  groups of $14$ {\em items} dominoes in any possible way, no anchor cell
  of color $3m+13$ is available for a fifth group of {\em items} dominoes
  (see details on Figure~\ref{fig:anchors}).
  This argument explains why all item and
  bin sizes of the original {\bf 4-Partition problem} instance have been
  multiplied by $28$ (and not simply by $4$): so that each group of {\em
  items} dominoes is large enough to enforce that at most four per bin can
  match the anchor.
\end{proof}

\begin{figure}
  \centerline{\includegraphics[width=\textwidth]{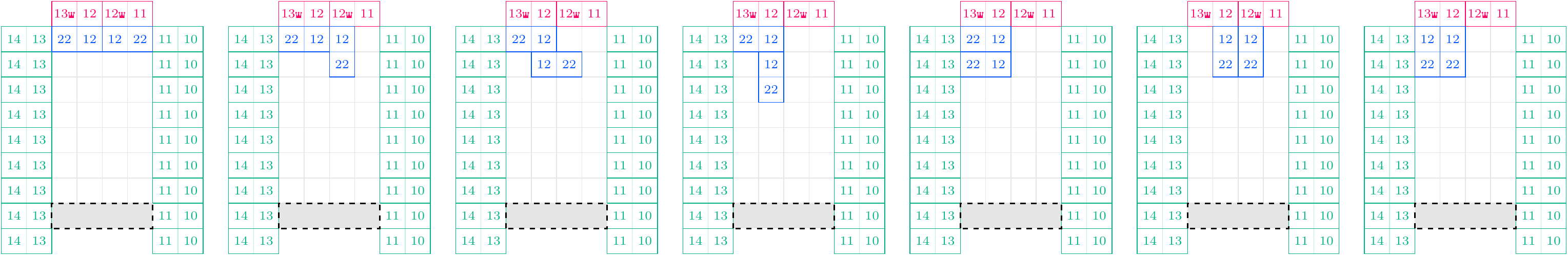}}
  \caption{Up to axial symmetry, seven different ways to place a pair of
  anchors in a bin (colors of the first bin are taken as an example).
  After placing the first group of at least $14$ {\em items} dominoes, at
  least one position on the eighth row of the bin (dashed) is occupied
  (one can simply count available positions); after placing the second
  group, at least a second position of the eighth row is occupied; after
  the third group a third one; and after the fourth group the eighth row
  of the bin is full of {\em items} dominoes. However, after these four
  groups of {\em items} dominoes, cells of anchor color $3m+13$ cannot
  exceed the seventh row (the third row after the pair of {\em anchors}
  dominoes, plus one for each group of {\em items} dominoes),
  consequently no more group of {\em items} dominoes can match an anchor
  color and take place inside the bin.}
  \label{fig:anchors}
\end{figure}

\section{Conclusion}
\label{s:conc}

Theorem~\ref{theorem:main} establishes that {\em Kingdomino\tm} shares the
feature of many fun games: it requires to solve instances of an $\NP$-complete
problem. Finding efficient moves is therefore\footnote{Unless $\Poly=\NP$.}
a computationally hard
task, and players may feel glad to encounter good solutions.

As we have seen in Section~\ref{s:count}, the number of possible K-tilings may
grow rapidly. The main difficulty in designing of the $\NP$-hardness
reduction to the {\bf K-tiling problem}, has been to find an initial sequence
of dominoes which imposes a rigid structure (with very few possible K-tiling
reaching a maximum score), and still allows to be continued in order to
implement some strong $\NP$-complete problem (given by the instance from the
reduction).

\bigskip

Our modeling of the game {\em Kingdomino\tm} abstracts various aspects of the
game (as board games are finite, this is necessary), and our construction in
Theorem~\ref{theorem:main} is frugal in terms of crowns, but it is opulent in
terms of colors (we have not tried to diminish the usage of colors).
An open question is whether the {\bf K-tiling problem} is
still $\NP$-hard if the number of colors is
bounded?

\bigskip

It would also be interesting to integrate the multi player notion of strategy
to the abstract modeling of the game.
This somewhat involved process in the
official {\em Kingdomino\tm} rules may be simplified
as follows.
Given,
\begin{itemize}
  \item $k$ K-tilings $K_1,\dots,K_k$ by some sequences of dominoes
    $\tau_1,\dots,\tau_k$, and
  \item an unordered set of dominoes $\tau$,
\end{itemize}
the $k$ players
\begin{enumerate}
  \item construct $k$ sequences of dominoes $\tau'_1,\dots,\tau'_k$
    by picking one domino at a time from $\tau$
    (turn by turn, until $\tau'_1,\dots,\tau'_k$ form a partition of $\tau$),
    and then
  \item each player $P_i$ plays its sequence $\tau'_i$ from $K_i$.
\end{enumerate}
We ask whether player $i \in \{1,\dots,k\}$ has a winning
strategy.

Remark that this modeling also discard two rules of the official {\em
Kingdomino\tm} game. First, at the domino picking stage, the order in which
players pick dominoes is fixed, and does not depend on the previously picked
dominoes as in the official game. Second, the game is separated in two stages:
a domino picking stage where players construct their $\tau'_i$ until $\tau$ is
empty, and then a domino placement stage where players place their $\tau'_i$,
whereas in the official game these alternate. Also note that the strategy part
of this multi player game is on the first stage only, the second stage only
consists for each player $P_i$ to maximize its score given $K_i$ and $\tau'_i$.

For any $k$ this problem is solvable in polynomial-space ($\PSPACE$),
because one can enumerate all possible game plays (all sequences of dominoes
choices $\tau'_1,\dots,\tau'_k$, and for each of them the best achievable score
of each player) and discover whether player $P_i$
has a winning strategy or not (the existence of a winning strategy
corresponds to the satisfiability of a quantified propositional formula).

Note that, similarly to other multi players games, starting from
empty boards (only the tower for each player, {\em i.e.}
$\tau_1=\dots=\tau_k=\emptyset$), a strategy stealing argument would lead to
the conclusion that the first player always has a winning
strategy\footnote{For two players: by contradiction suppose $P_2$ has a winning
strategy; $P_1$ first takes any domino $d$ and then follows the winning
strategy of $P_2$ on $\tau$ ({\em i.e.} $P_1$ picks dominoes according to the
choices of $P_2$ as if she had never taken domino $d$ and $P_2$ had started the
game); if at some point $P_1$ needs to take domino $d$ (according to the
strategy being stolen), then she takes any available domino $d'$ and the
reasoning goes on with $d$ substituted by $d'$. Player $P_1$ can always steal the moves of $P_2$, and therefore (by hypothesis) construct a sequence $\tau'_1$ leading to a score higher than $\tau'_2$.}.
As a consequence, adding non-empty starting boards $K_1,\dots,K_k$ is necessary,
and hopefully makes a hardness proof easier to construct.
However players' boards are independent of each other,
in the sense that each player plays its sequence of dominoes only
on {\em its own board},
which makes the setting a bit different from $\PSPACE$-hardness results
encountered in the literature about multi players games, such as
{\em Hex}~\cite{hex1,hex2},
{\em Checkers}~\cite{checkers},
{\em Go}~\cite{go} 
and other two players games with perfect information~\cite{twoplayerperfectinfo}.

\section*{Acknowledgments}

The work of K\'evin Perrot has mainly been funded by his salary of French State agent, assigned to
Univ. C{\^o}te d'Azur, CNRS, Inria, I3S, UMR 7271, Sophia Antipolis, France,
and to
Aix Marseille Univ., Univ. Toulon, CNRS, LIS, UMR 7020, Marseille, France.
We received auxiliary financial support from
the \emph{Young Researcher} project ANR-18-CE40-0002-01 FANs,
project ECOS-Sud C16E01, and
project STIC AmSud CoDANet 19-STIC-03 (Campus France 43478PD).

\bibliographystyle{plain}
\bibliography{biblio}

\end{document}